\definecolor{DarkRed}{rgb}{0.5,0.1,0.1}
\definecolor{DarkBlue}{rgb}{0.1,0.1,0.5}
\newtheorem{mdinvariant}{Invariant}
\DeclareMathOperator*{\OV}{{\bf OV}}
\DeclareMathOperator*{\CP}{{\bf CP}}
\newtheorem{theorem}{Theorem}[section]
\newtheorem{lemma}[theorem]{Lemma}
\newtheorem{definition}[theorem]{Definition}
\newtheorem{conjecture}[theorem]{Conjecture}
\newcommand{\wt}{\widetilde}
\newcommand{\eps}{\epsilon}
\renewcommand{\varepsilon}{\epsilon}
\renewcommand{\tilde}{\wt}
\renewcommand{\eps}{\epsilon}
\title{Faster Algorithms for Average-Case Orthogonal Vectors and Closest Pair Problems}
\author{
Josh Alman\thanks{\texttt{josh@cs.columbia.edu}. Columbia University. Supported in part by NSF Grant CCF-2238221 and a grant from the Simons Foundation (Grant Number 825870 JA).}
\quad
Alexandr Andoni\thanks{\texttt{andoni@cs.columbia.edu}. Columbia University. Supported in part by NSF grant CCF2008733 and ONR grant N00014-22-1-2713.}
\quad
Hengjie Zhang\thanks{\texttt{hengjie.z@columbia.edu}. Columbia University. Supported in part by NSF grant CCF2008733 and ONR grant N00014-22-1-2713.}
}
\date{}
\begin{document}
\maketitle
\thispagestyle{empty}

\abstract{

We study the average-case version of the Orthogonal Vectors problem, in which one is given as input $n$ vectors from $\{0,1\}^d$ which are chosen randomly so that each coordinate is $1$ independently with probability $p$. Kane and Williams [ITCS 2019] showed how to solve this problem in time $O(n^{2 - \delta_p})$ for a constant $\delta_p > 0$ that depends only on $p$. However, it was previously unclear how to solve the problem faster in the hardest parameter regime where $p$ may depend on $d$.

The best prior algorithm was the best worst-case algorithm by Abboud, Williams and Yu [SODA 2014], which in dimension $d = c \cdot \log n$, solves the problem in time $n^{2 - \Omega(1/\log c)}$. In this paper, we give a new algorithm which improves this to $n^{2 - \Omega(\log\log c /\log c)}$ in the average case for any parameter $p$.

As in the prior work, our algorithm uses the polynomial method. We make use of a very simple polynomial over the reals, and use a new method to analyze its performance based on computing how its value degrades as the input vectors get farther from orthogonal.

To demonstrate the generality of our approach, we also solve the average-case version of the closest pair problem in the same running time.

}

\clearpage
\pagenumbering{arabic} 

\section{Introduction}

Fine-grained complexity theory aims to determine the precise running times of important algorithmic problems which have polynomial-time algorithms. One typically starts with a fine-grained hardness assumption, which says that the best-known running time for a key problem cannot be improved by a polynomial factor, and then one uses fine-grained reductions to transfer that hardness assumption to many other problems of interest. See, for instance, the survey~\cite{williams2018some} for a detailed overview and many examples.

\paragraph{Orthogonal Vectors}
The Orthogonal Vectors (OV) problem is at the heart of one of the most popular hardness assumptions. In this problem, one is given as input $n$ vectors from $\{0,1\}^d$, and the goal is to determine whether any pair of them is orthogonal (over $\mathbb{Z}$). The straightforward algorithm runs in $O(n^2 d)$ time, and there are two nontrivial algorithms known: A (folklore) divide-and-conquer algorithm running in time $O(n + d \cdot 2^d)$ for small $d\leq 2\log n$, and an algorithm by Abboud, Williams, and Yu~\cite{abboud2014more} based on the \emph{polynomial method} which runs in time $n^{2 - 1/O(\log c)}$ for dimension $d = c \log n$. (See also~\cite{chan2016deterministic} for a derandomization of the polynomial method algorithm.)

The OV Conjecture states that when $d = \Omega(\log n)$, this problem cannot be solved in truly-subquadratic time:
\begin{conjecture}[OV Conjecture]
    For every $\delta>0$, there exists a $c>0$ such that $OV$ for $d = c \log n$ cannot be solved in time $O(n^{2-\delta})$.
\end{conjecture}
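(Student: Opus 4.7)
I should be upfront that the OV Conjecture is stated as a conjecture precisely because no unconditional proof is known; establishing it would constitute a major breakthrough in fine-grained complexity. What follows is therefore a description of the shape an unconditional proof would take, together with an honest account of the barriers that make such a proof unreachable with present techniques.

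The natural plan is a direct time lower bound in a general model of computation (say word-RAM with $O(\log n)$-bit words). I would begin by designing a hard-core distribution on inputs --- $n$ vectors in $\zo^d$ with $d = c \log n$ --- on which any algorithm must perform $n^{2-o(1)}$ primitive operations. A candidate is to take each coordinate i.i.d.\ Bernoulli at a density tuned so that a single planted orthogonal pair is embedded among $\Theta(n^2)$ near-orthogonal decoys of inner product exactly $1$. The goal would be to argue, via a communication- or cell-probe-style bottleneck, that distinguishing the planted pair from the decoys forces the algorithm to read or combine information from essentially every one of the $\binom{n}{2}$ pairs, and then to apply a Yao-style minimax to convert this distributional hardness into a worst-case bound of the form $n^{2-\delta'}$ for any $\delta' > 0$ by choosing $c = c(\delta')$ sufficiently large.

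The central obstacle --- the reason this program has not succeeded --- is that an unconditional $n^{2-o(1)}$ lower bound against general algorithms for OV would entail circuit and branching-program lower bounds that are far beyond the state of the art. Even the weaker task of ruling out an $n^{1.99}$ algorithm for a fixed $c$ appears to require unconditional hardness for CNF-SAT at a granularity that no current technique (relativization, algebrization, natural proofs) can breach. Moreover, the polynomial-method upper bound of \cite{abboud2014more} ($n^{2 - 1/O(\log c)}$) and the improved average-case bound of the present paper ($n^{2 - \Omega(\log\log c/\log c)}$) show that any unconditional lower bound would have to be quantitatively delicate: the exponent savings are real and grow as $c$ shrinks, so a putative proof cannot rule out all $\delta > 0$ uniformly in $c$ but must instead track the fine dependence of $\delta$ on $c$, and must avoid being refuted by further polynomial-method improvements of the kind studied here.

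Consequently, the only form of ``proof'' currently available is a conditional one, reducing the statement to a more basic hardness assumption such as SETH; an unconditional proof along the lines sketched above remains an open goal and would require new lower-bound techniques that bypass the known barriers rather than a routine application of existing tools. I therefore view the plan above as describing the \emph{target} of an unconditional proof rather than a realistic route to one.
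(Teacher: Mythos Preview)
Your assessment is correct: the statement is labeled a \emph{conjecture} in the paper, and the paper offers no proof of it whatsoever. It is simply stated as a hardness assumption, with the remark that it is implied by SETH via the reduction of Williams~\cite{williams2005new}, and then used as context for the algorithmic results that follow. So there is nothing to compare your proposal against --- the paper does not attempt a proof, conditional or otherwise.

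Your discussion of barriers (circuit lower bounds, relativization/algebrization/natural proofs, the quantitative tension with known $n^{2-1/O(\log c)}$ upper bounds) is sensible and goes well beyond what the paper says, but it is commentary on why the conjecture is open rather than a proof strategy the paper pursues. If the goal was to match the paper's treatment, the right answer is simply: ``This is a conjecture; the paper does not prove it and only notes that it follows from SETH.''
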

\noindent By contrast, the algorithm of~\cite{abboud2014more} shows that for every $c>0$ there exists a $\delta = 1/O(\log c) > 0$ such that $OV$ for $d = c \log n$ can be solved in time $O(n^{2-\delta})$.

The OV Conjecture is implied by the Strong Exponential Time Hypothesis (SETH)~\cite{impagliazzo2001complexity,williams2005new}, and most SETH-hardness results in the literature actually proceed by reducing from OV. The OV Conjecture is known to prove hardness for a variety of problems in graph algorithms, string algorithms, dynamic algorithms, computational biology, nearest neighbor search, and other areas; see~\cite[Section 3]{williams2018some}.

\paragraph{Average-Case Fine-Grained Complexity}
A more recent line of work has begun studying \emph{average-case} fine-grained complexity~\cite{kane2017orthogonal,ball2017average,ball2018proofs,goldreich2018counting,lavigne2019public,boix2021average,dalirrooyfard2020new,brakerski2020hardness}. This area investigates whether problems like OV, which are assumed to be hard in the worst-case, remain hard on random inputs. As with (non-fine-grained) average-case complexity, there are two primary motivations for this line of work.

The first motivation is to solving instances of problems which arise in practice more quickly. For instance, if OV can be solved faster on average, then even if a problem $P$ is known to be hard assuming the OV Conjecture, it is possible that realistic inputs to $P$ may be drawn as in the average case and solved faster as well.

The second motivation is to develop a foundation for \emph{fine-grained cryptography}. Usual cryptographic primitives stipulate that an adversary cannot compromise their security in polynomial time; in fine-grained cryptography, we weaken the requirement and only require that an adversary cannot compromise the security in a specific polynomial time (like quadratic or cubic). This weaker requirement may suffice in many situations, and it may be achievable even if our usual cryptographic assumptions are false. The bedrock of such a theory needs to be an average-case fine-grained hard problem.

\paragraph{Average-Case OV}
Toward these goals, in this paper we study the average-case version of the OV problem, in which the entries of the input vectors are drawn independently and identically.

\begin{definition}[$\OV(p)_{n,d}$]\label{def:ov}
We are given as input two sets $X,Y$ of $n$ vectors each from $\{0,1\}^d$, where each vector is chosen i.i.d. as follows: every bit is set to be $1$ with probability $p$ and $0$ otherwise. In the $\OV(p)_{n,d}$ problem, with probability at least $1 - o(1)$, we must output an orthogonal pair $(x,y) \in X \times Y$ with $\langle x,y \rangle = 0$ if one exists, and determine that there is no orthogonal pair otherwise.
\end{definition}

This problem was first studied by Kane and Williams~\cite{kane2017orthogonal}, who showed that for every $p>0$ there is a $\delta_p = \Theta(p) > 0$ such that $\OV(p)_{n,d}$ can be solved in time $O(n^{2 - \delta_p})$. In fact,~\cite{kane2017orthogonal} showed that even $\mathsf{AC^0}$ branching programs, a very simple model of computation, are able to solve $\OV(p)_{n,d}$ with this size. 
Subsequent work by Dalirrooyfard, Lincoln and Vassilevska Williams~\cite{dalirrooyfard2020new} showed that one can similarly \emph{count} the number of orthogonal pairs in instances drawn from the $\OV(p)_{n,d}$ distribution in time $O(n^{2-\delta'_p})$ for some $\delta'_p = \Theta(\frac{p^4}{\log(1/p)}) > 0$.

Kane and Williams' algorithm solves $\OV(p)_{n,d}$ in truly subquadratic time only when one thinks of $p$ as a fixed constant. 
However, as they observe (and we use here as well), the only non-trivial regime of $p$ is around $\sqrt{\frac{2\ln 2}{c}}$, for $c=d/\log n$, which is when the expected number of orthogonal pairs is a constant.\footnote{For larger values $p$, there is $o(1)$ probability that there exists an orthogonal pair, so the algorithm doesn't need to do anything. For $p$ much smaller, there are so many orthogonal pairs that we can subsample the input set and just run the straight-forward algorithm in $n^{2-\Omega(1)}$ time.}

Indeed, in this hardest setting where $p = 1/\Theta(\sqrt{c})$, Kane and Williams' algorithm runs in time $n^{2 - 1/O(\sqrt{c})}$, which is actually slower than the worst-case algorithm~\cite{abboud2014more} which runs in time $n^{2 - 1/O(\log c)}$. The ideas from the prior algorithms~\cite{kane2017orthogonal,dalirrooyfard2020new} do not seem to yield any improvements to the worst-case algorithm in this setting, since they rely on arguments about the sparsities of vectors which become insignificant as $p \to 0$. The worst-case algorithm~\cite{abboud2014more} also does not run faster in this average-case setting, since it relies on arguments about \emph{parities} of inner products of random sub-vectors, which are not substantially altered by sparsity or randomness.

In other words, for the hardest choice of $p$, the $n^{2 - 1/O(\log c)}$ worst-case running time seems to be the best we can achieve in the average case $\OV(p)_{n,d}$ as well.

\paragraph{New Faster Algorithm}
Our main result is a new, faster algorithm for $\OV(p)_{n,d}$, no matter how small $p$ is.

\begin{theorem}\label{thm:main}
For any fixed constant $c>1$, and $d=c\log n$, and any $p\in(0,1)$, we can solve $\OV(p)_{n,d}$ in $n^{2-\Omega(\frac{\log\log c}{\log c})}$ time.
\end{theorem}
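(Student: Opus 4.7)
\textbf{Proof sketch for Theorem~\ref{thm:main}.}

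The plan is to follow the polynomial-method-with-batching framework of Abboud, Williams and Yu, but to replace their probabilistic polynomials with a simple real polynomial whose value can be tracked as an explicit function of $\langle x, y\rangle$, leveraging the concentration of inner products on random inputs.

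First, I would reduce to the balanced regime $p^2 \approx (2\ln 2)/c$, since for other values of $p$ the problem is either trivially false with high probability or can be handled by sub-sampling and falling back to the naive algorithm. In this regime, for a uniformly random pair the inner product $t = \langle x, y\rangle$ is distributed as $\mathrm{Bin}(d, p^2)$ with mean $\mu = \Theta(\log n)$, and by Chernoff bounds at most $n^{2 - \Omega(1)}$ pairs have $t$ far from $\mu$; such atypical pairs can be detected and handled by a separate subroutine.

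Next, I would fix a simple polynomial $P(x, y)$ of degree $L$ in $\langle x, y\rangle$ (hence degree $2L$ in the variables $x, y$) satisfying $P(x, y) = 1$ when $t = 0$ and $|P(x, y)|$ decaying multiplicatively in $t$. A natural candidate is the hypergeometric-type polynomial $P(x, y) = \binom{d - \langle x, y\rangle}{L}/\binom{d}{L}$, which on Boolean inputs is a multilinear polynomial with exactly $\binom{d}{\leq L}$ monomials of the form $\prod_{i \in T} x_i y_i$, equals $(1 - t/d)^L$ up to lower-order corrections, and satisfies $|P(x, y)| \approx 2^{-2L/c}$ at a typical inner product $t \approx \mu$. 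Partitioning $X$ and $Y$ into batches of size $s = n^\beta$, I would then compute $F(X', Y') = \sum_{(x,y) \in X' \times Y'} P(x, y)$ for every batch pair via one rectangular matrix multiplication of shape $(n/s) \times \binom{d}{\leq L} \times (n/s)$, and verify all batch pairs for which $|F|$ exceeds a chosen threshold in time $O(s^2 d)$ each.

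The hard part is balancing three sources of cost as functions of $L$ and $\beta$: the rectangular matrix multiplication cost, which is $(n/s)^{2+o(1)}$ only as long as $\binom{d}{\leq L} \leq (n/s)^{r_0}$ for the largest $r_0$ with $\omega(1, r_0, 1) = 2$; the expected spurious contribution $s^2 \cdot 2^{-\Theta(L/c)}$ to $F$ from typical non-orthogonal pairs; and the verification cost charged to atypical pairs and to true orthogonal pairs. The key new ingredient compared to the worst-case analysis is that $P$ only needs to be small on the typical range $t = \Omega(\log n)$, not on all $t \geq 1$, which lets $L$ be significantly smaller for a given noise tolerance while still driving the spurious contribution below the unit signal from an orthogonal pair. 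Tuning $L$ and $\beta$ so that the three cost terms match, and showing that the polynomial's value genuinely decays along the distribution of $\langle x, y\rangle$ rather than only in the worst case, then yields the improved saving of $\Omega(\log\log c / \log c)$ in the exponent.
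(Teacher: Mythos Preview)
Your high-level framework matches the paper's, but the specific polynomial you propose does not achieve the claimed saving. For $P(x,y)=\binom{d-t}{L}/\binom{d}{L}$ with $t=\langle x,y\rangle$, a direct computation gives $E[P]=(1-p^2)^L$ exactly, so the signal-to-noise ratio between an orthogonal pair (value $1$) and the expected spurious contribution from $s^2$ random pairs is $(1-p^2)^{-L}\approx e^{p^2 L}=e^{\Theta(L/c)}$ in the hard regime $p^2=\Theta(1/c)$. To reach $s=n^{\Theta(\log\log c/\log c)}$ you would therefore need $L=\Theta\bigl(c\log n\cdot\frac{\log\log c}{\log c}\bigr)$, and then $\binom{d}{\le L}\ge\binom{c\log n}{L}=n^{\Theta(c\,(\log\log c)^2/\log c)}$, far larger than any $(n/s)^{O(1)}$, which destroys the rectangular matrix multiplication step. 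Conversely, if you keep $L=\Theta(\log n/\log c)$ so that the monomial count stays below $n^{0.1}$, the achievable batch size is only $s=n^{\Theta(1/(c\log c))}$, strictly worse than even the worst-case $1/\log c$ saving. The observation that ``$P$ only needs to be small on the typical range $t=\Omega(\log n)$'' does not buy a $\log\log c$ factor here: your polynomial decays like $(1-t/d)^L$ across $[0,d]$, and the relevant ratio is governed by $\mu/d=p^2=\Theta(1/c)$, which is tiny. Your ``separate subroutine'' for atypical pairs is a second gap---locating the pairs with unusually small inner product is essentially the problem you are trying to solve.

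The paper's fix is to \emph{center} the polynomial at the mean: it takes $Q(t)=(t-p^2 d)^q$ with even $q=\Theta(\log n/\log c)$, so that an orthogonal pair contributes $(p^2 d)^q\approx(\log n)^q$ while a pair with $t\approx p^2 d$ contributes nearly zero. The analysis then tracks, for each deviation level $\alpha$, how many $x$ in a batch have $\langle x,y\rangle=p^2 d+\alpha\log n$, and balances the Chernoff tail $n^{-\Theta(\alpha^2)}$ against the polynomial value $(\alpha\log n)^q$; the maximum of $n^{-\Theta(\alpha^2)}\alpha^q$ over $\alpha$ occurs at $\alpha=\Theta(1/\sqrt{\log c})$ and equals $n^{-\Theta(\log\log c/\log c)}$, which is precisely where the improved exponent comes from. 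This level-by-level accounting is the ``smooth degradation'' you allude to in your last sentence, but it only works because the polynomial is centered at $\mu$ rather than at $0$ or $d$; with that change no side computation for atypical pairs is needed at all.
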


We improve the savings in the exponent from $\Omega(\frac{1}{\log c})$ to $\Omega(\frac{\log\log c}{\log c})$. To our knowledge, this is the first algorithm for average-case OV that is faster than the best worst-case OV algorithm in all parameter regimes. Although this is not a truly subquadratic-time algorithm, it gives evidence that the average case may indeed be easier than the worst case. 

Our algorithm uses a new variant on the polynomial method, the same high-level technique which is used by~\cite{abboud2014more}. Whereas the prior algorithm works over the finite field $\mathbb{F}_2$ in order to save low-order factors in the degree they need, we instead use a polynomial over $\mathbb{R}$ which we carefully select to have good performance in expectation. Most importantly, we leverage the fact that our polynomial has a smooth degradation between pairs that are orthogonal and pairs that are far from orthogonal (which are most common).



\paragraph{Average-Case Closest Pair}

In order to demonstrate the generality of our approach, we also use it to solve the average-case version of the closest pair problem.

\begin{definition}[$\CP_{n,d}$]\label{def:cp}
We are given as input two sets $X,Y$ of $n$ vectors each from $\{0,1\}^d$, where each vector is chosen uniformly and independently at random. In the $\CP_{n,d}$ problem, with probability at least $1 - o(1)$, we must output a pair $(x,y) \in X \times Y$ with minimum Hamming distance.
\end{definition}

This is a problem where sparsity does not come into play: all the vectors have roughly half 0s and half 1s. By contrast, the hard case of $\OV(p)_{n,d}$ is when $p$ is very small, a fact we take advantage of in our algorithm. Nonetheless, our technique leads to a new faster algorithm for $\CP_{n,d}$ as well: The fastest known algorithm in the worst case by Alman, Chan and Williams~\cite{alman2015probabilistic,alman2016polynomial}, which also uses the polynomial method, runs in time $n^{2 - \tilde{\Omega}(1/\sqrt{c})}$, and our new average-case algorithm improves this to $n^{2 - \Omega(\frac{\log\log c}{\log c})}$:

\begin{theorem}\label{thm:main2}
For any fixed constant $c>1$, and $d=c\log n$, we can solve $\CP_{n,d}$ in $n^{2-\Omega(\frac{\log\log c}{\log c})}$ time.
\end{theorem}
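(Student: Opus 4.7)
The plan is to adapt the polynomial-method scheme proposed for $\OV(p)_{n,d}$ in Theorem~\ref{thm:main} to the Hamming-distance geometry of closest pair. First, I reduce the exact-minimum problem to a sequence of decision problems of the form ``is there a pair $(x,y) \in X \times Y$ with $\mathrm{Ham}(x,y) \le k$?''. Standard concentration shows that for uniform $x,y \in \{0,1\}^d$ with $d = c\log n$, the minimum Hamming distance concentrates near a deterministic value $k^*$ satisfying $d/2 - k^* = \Theta(\sqrt{c}\log n)$, so running the decision subroutine on $O(\log n)$ thresholds $k \in [k^* - O(1), k^* + O(1)]$ and returning a pair at the smallest flagged threshold suffices, at an overhead of a polylog factor.

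For the decision problem at threshold $k$, I would use the real polynomial $P(x,y) = q(\mathrm{Ham}(x,y))$, where $\mathrm{Ham}(x,y) = \sum_{i=1}^d (x_i + y_i - 2 x_i y_i)$ is bilinear of degree $1$ in each of $x,y$, and $q$ is a carefully chosen univariate real polynomial of degree $D$. Following the recipe of the proof of Theorem~\ref{thm:main}, I pick a simple $q$ with smooth degradation across the typical distance range, for instance $q(t) = \prod_{j=1}^{D} (t - r_j)$ with nodes $r_j$ (Chebyshev-style) spread across the bulk $[d/2 - O(\sqrt{d}),\, d/2 + O(\sqrt{d})]$ of the Hamming distance distribution. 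On that bulk $|q(t)| \le O(\sqrt{d}/e)^D$, while $|q(k^*)| \ge \Omega(d/2 - k^*)^D$, producing a gap of roughly $((d/2-k^*)/\sqrt{d})^D = (\sqrt{\log n})^D$. Since $q$ has degree $D$ in each of $x$ and $y$, expanding yields a separated form $P(x,y) = \sum_M M_X(x)\, M_Y(y)$ with at most $m \le 2^{O(D)} \binom{d}{D}$ monomial pairs.

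With this decomposition I follow the standard polynomial-method/matrix-multiplication template: partition $X$ and $Y$ into batches of size $N$, and for each pair of batches $(X_i, Y_j)$ compute the $N \times N$ table $(P(x,y))_{x \in X_i,\, y \in Y_j}$ by a rectangular matrix product between an $N \times m$ matrix and an $m \times N$ matrix. A per-batch sum/threshold analysis analogous to the one in Theorem~\ref{thm:main} flags exactly the batches containing a pair with $\mathrm{Ham}(x,y) \le k$, and the (few) candidate pairs in each flagged batch are verified by brute force. Setting $D = \Theta(\log c / \log \log c)$ makes $m = n^{o(1)}$, and tuning the batch size $N$ so that the polynomial gap $(\sqrt{\log n})^D$ dominates the aggregate per-batch noise $N \cdot (\sqrt{d}/e)^D$ gives total running time $n^{2 - \Omega(\log\log c / \log c)}$ as claimed.

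The main obstacle is the joint tuning: the polynomial gap $(\sqrt{\log n})^D$ must beat the batched noise (at scale $N$), pushing $D$ upward, while $m = \binom{d}{D}$ must stay small enough that the matrix products dominate the batching overhead, pushing $D$ downward. These constraints pull in opposite directions, and the $\log\log c / \log c$ exponent emerges as the sweet spot. A secondary subtlety is that we must recover the \emph{true} minimum-distance pair rather than merely a close one; this is handled by iterating the decision procedure over $O(\log n)$ thresholds within the concentration window around $k^*$ and returning the pair at the smallest flagged threshold. Unlike in the OV case, here sparsity cannot be exploited (the inputs are uniform), so the whole savings comes from the smooth degradation of $q$ across the typical-distance bulk, which is precisely what the real-polynomial formulation is designed to exploit.
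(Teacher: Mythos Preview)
Your degree choice $D=\Theta(\log c/\log\log c)$ is a constant in $n$ (since $c$ is fixed), so the signal-to-noise gap you compute, $(\sqrt{\log n})^D$, is only $\mathrm{polylog}(n)$. That forces the batch size $N\le\mathrm{polylog}(n)$ and hence a running time of $n^2/\mathrm{polylog}(n)$, not $n^{2-\Omega(\log\log c/\log c)}$. The paper instead takes degree $q=\Theta(\log n/\log c)$, growing with $n$; the monomial count $\binom{2d}{q}\le (O(c\log c))^{q}=n^{O(1/\log c)\cdot\log(c\log c)}\le n^{0.05}$ still stays small precisely because $d/q=\Theta(c\log c)$, so the usual $\binom{d}{q}$ bound does not blow up.

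The second gap is your noise analysis. Bounding $|q(t)|$ only on the ``bulk'' $[d/2-O(\sqrt d),\,d/2+O(\sqrt d)]$ ignores the tails, and with a batch of size $N$ you will have many pairs outside that window whose polynomial values are comparable to the signal. The paper's argument (Lemma~\ref{lem:core_lemma_cp}) avoids a single threshold entirely: it stratifies by the deviation $\alpha$ in $\langle x,y\rangle=d/2+\alpha\sqrt c\log n$, bounds the count $n_\alpha$ at each level by $\le s\cdot n^{-\Theta(\alpha^2)}$ via Chernoff, and then bounds $\sum_\alpha n_\alpha\,(\alpha\sqrt c\log n)^q$. The trade-off $n^{-\Theta(\alpha^2)}\cdot\alpha^q$ is what produces the $\log\log c/\log c$ exponent when $q=\Theta(\log n/\log c)$; a crude ``max over the bulk'' bound cannot see this. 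Relatedly, replacing $(z-d/2)^q$ by a Chebyshev polynomial does not help here (the paper notes this in the conclusion): the improvement comes from the layered expectation analysis, not from sharpening the polynomial on a fixed interval.
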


\noindent This is an even larger saving in the exponent than before, from $\tilde{\Omega}(1/\sqrt{c})$ to $\Omega(\frac{\log\log c}{\log c})$.

We briefly compare $\CP_{n,d}$ with the \emph{light bulb} problem, another popular variant on the closest pair problem with random inputs. In the light bulb problem, one is given as input $n$ vectors from $\{0,1\}^d$ each chosen uniformly and independently, as well as a \emph{planted} closest pair of vectors which has Hamming distance at most $(1/2 - \rho)d$ for some constant $\rho>0$, and the goal is to find the planted pair. In other words, the light bulb problem is a ``planted version'' of $\CP_{n,d}$, where the planted pair is much closer than the closest pair will be in $\CP_{n,d}$. By taking advantage of how close the planted pair is, one can solve the light bulb problem in truly subquadratic time $O(n^{1.62})$~\cite{valiant2015finding,karppa2018faster,karppa2016explicit,alman2018illuminating,alman2023generalizations}. This running time can even be achieved using the polynomial method~\cite{alman2018illuminating}. Here we focus instead on the harder $\CP_{n,d}$.

\subsection{Technique Overview}

\paragraph{Prior Average-Case Algorithm}

Kane and Williams' average-case OV algorithm~\cite{kane2017orthogonal} critically makes two observations about the sparsities of vectors in $\OV(p)_{n,d}$. The first, as mentioned before, is that, for $d = c \log n$, we need only consider $p$ that is around $\sqrt{\frac{2\ln 2}{c}}$, which is when we have a constant number of orthogonal pairs in expectation. 

The second is that, if we draw vectors $u,v \in \{0,1\}^d$ where each coordinate is $1$ with probability $p$, but then condition on the event that $u$ and $v$ are orthogonal, then the expected number of $1$s in $u$ or $v$ is only $d p/(1+p)$. Thus, to solve $\OV(p)_{n,d}$, we may focus on vectors with sparsity close to $d p/(1+p)$. However, there are only $n^{1 - \Omega(p)}$ such vectors with high probability, so the trivial quadratic time algorithm applied to these vectors runs in truly subquadratic time. It seems difficult to further improve this algorithm without new ideas, since there really are likely to be $n^{1 - O(p)}$ vectors as sparse as the orthogonal pair.

\paragraph{Prior Worst-Case Algorithm}

Abboud, Williams and Yu's worst-case OV algorithm~\cite{abboud2014more} uses the polynomial method. This method, at a high level, consists of two main components:
\begin{enumerate}
    \item Design a low-degree polynomial which determines whether two input vectors are orthogonal, then
    \item Use a now-standard technique (based on fast rectangular matrix multiplication) to quickly evaluate the polynomial on many pairs of inputs and find ones which the polynomial accepts.
\end{enumerate}
In our new algorithm in this paper, we will also use this approach, and we will use batch polynomial evaluation in the same way here, but we will design a different polynomial. 

The polynomial of \cite{abboud2014more} is a probabilistic polynomial over $\mathbb{F}_2$, i.e., a randomly drawn polynomial $p : \{0,1\}^d \times \{0,1\}^d \to \{0,1\}$ such that, if $u,v \in \{0,1\}^d$ are orthogonal then $p(u,v) = 1$, and if $u,v$ are not orthogonal then $\Pr[p(u,v) =0] \geq 1-\varepsilon$. They use a classic polynomial construction of Razborov and Smolensky~\cite{razborov1987lower,smolensky1987algebraic} from circuit complexity theory, which achieves degree $O(\log 1/\varepsilon)$ over $\mathbb{F}_2$. Meanwhile, the best-known probabilistic polynomial over $\mathbb{R}$ has worse degree $O(\log d \log 1/\varepsilon)$~\cite{tarui1993probabilistic,beigel1990perceptron,bhandari2021probabilistic}.

\paragraph{Our New Algorithm} Our new algorithm instead uses a simple, deterministic polynomial over $\mathbb{R}$, for an appropriate choice of degree $q \in \mathbb{N}$:
$$P(u,v) = (p^2d - \langle u,v\rangle)^q.$$
Intuitively, $P(u,v)$ is large when $u,v$ are orthogonal (in that case $P(u,v) = (p^2 d)^q$), but $P(u,v)$ is small for a typical random $u,v$ (since the expected value of $\langle u,v \rangle$ is $p^2 d$). Thus, when we add together many evaluations of $P$, the result will be large if any of the evaluation pairs are orthogonal, and small otherwise.

If we argue that $|P(u,v)| \leq T$ for a typical random $u,v$, then it would follow that we can add together $(p^2 d)^q/T$ evaluations to detect an orthogonal pair. This type of ``threshold'' argument is prevalent in prior algorithmic uses of the polynomial method over $\mathbb{R}$ (e.g.,~\cite{chan2016deterministic,alman2016polynomial,chen2018hardness}, including the previous work on the light bulb problem~\cite{alman2018illuminating}). We instead argue about the \emph{expected value} of $P(u,v)$ over randomly drawn $u,v$ and find that it is much smaller than $T$, ultimately leading to our improved bound. Put differently, in contrast to the prior uses, rather than caring about only two values of the polynomial (for orthogonal vs non-orthogonal pair), we crucially use different bounds on the polynomial value for different regimes (e.g., far from orthogonal and closer to orthogonal). 
Rather than delving into more details here, we direct the reader to the full, short proof below.


\section{Orthogonal Vectors under Random distribution}
\subsection{Preliminary}
In this section, we give our new algorithm for $\OV(p)_{n,d}$, with $d = c \log n$. 
As mentioned previously, we will assume $\sqrt{1/c} \leq p \leq 3 \sqrt{1/c}$. This is because the problem is straightforward otherwise: We can calculate that if $x,y \in \{0,1\}^d$ are drawn with each entry $1$ with probability $p$ i.i.d., then 
\[
\Pr_{x,y}[\langle x,y \rangle = 0] = (1-p^2)^d = (1/e)^{(1+o(1))p^2d} = n^{-(1+o(1))\ln 2 \cdot p^2c}.
\]
If $p < \sqrt{1/c}$, there will be more than $n^{0.1}$ orthogonal pairs with high probability, and we can subsample and solve the problem on a sublinear number of vectors. If $p > 3 \sqrt{1/c}$, there will be no orthogonal pair with high probability, so we can immediately return.

Our algorithm and analysis will make use of two key tools. First is the standard Chernoff bound.

\begin{lemma}[The Chernoff bound]\label{lem:chernoff} Given $n$ i.i.d. random variables $x_1,\ldots,x_n$. Let $X=\sum x_i$ and $\mu = E[X]$, then, for $0< \delta < 1$, we have
\[
    \Pr[|X-\mu|\geq \delta\cdot \mu]\leq 2e^{-\delta^2\mu/3}.
\]
\end{lemma}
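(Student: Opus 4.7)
The plan is to prove this by the classical moment generating function (MGF) / exponential Markov inequality method. First, for the upper tail $\Pr[X \geq (1+\delta)\mu]$, I would introduce a parameter $t > 0$ and write
\[
\Pr[X \geq (1+\delta)\mu] = \Pr[e^{tX} \geq e^{t(1+\delta)\mu}] \leq \frac{\mathbb{E}[e^{tX}]}{e^{t(1+\delta)\mu}},
\]
which is just Markov's inequality applied to the nonnegative random variable $e^{tX}$.

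Next I would use independence to factor the MGF: $\mathbb{E}[e^{tX}] = \prod_{i=1}^n \mathbb{E}[e^{t x_i}]$. For the Bernoulli-type $x_i$ intended here, with $\mathbb{E}[x_i] = p_i$, the pointwise inequality $1 + z \leq e^z$ gives $\mathbb{E}[e^{t x_i}] = 1 + p_i(e^t - 1) \leq e^{p_i(e^t - 1)}$, hence $\mathbb{E}[e^{tX}] \leq e^{\mu(e^t - 1)}$. Substituting yields $\Pr[X \geq (1+\delta)\mu] \leq e^{\mu(e^t - 1) - t(1+\delta)\mu}$, and the natural choice $t = \ln(1+\delta)$ produces the canonical Chernoff form $\left(e^\delta/(1+\delta)^{1+\delta}\right)^\mu$. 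The lower tail $\Pr[X \leq (1-\delta)\mu]$ is handled by the symmetric argument with $t < 0$ (equivalently, applying the same derivation to $-X$), giving a bound of the same shape.

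The final step is the elementary calculus inequality $\frac{e^\delta}{(1+\delta)^{1+\delta}} \leq e^{-\delta^2/3}$ for $0 < \delta < 1$, obtained by Taylor-expanding $f(\delta) := (1+\delta)\ln(1+\delta) - \delta$ and checking $f(\delta) \geq \delta^2/3$ on this range. A union bound over the two tails absorbs the factor of $2$ into the constant, yielding the stated two-sided bound.

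The main (and essentially only) obstacle is the calculus step bounding $(1+\delta)\ln(1+\delta) - \delta$ from below by $\delta^2/3$; it is standard but a little fiddly to verify cleanly. Everything else is a direct application of Markov's inequality, independence, and a single-variable optimization. Since this is a textbook result, in practice I would simply cite a standard reference rather than reproduce the derivation in the paper.
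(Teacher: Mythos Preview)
The paper does not prove this lemma at all; it is stated without proof as a standard preliminary tool. Your sketch is the correct textbook MGF derivation, and your closing remark---that in practice one simply cites a reference---is exactly what the paper does.
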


Second is a tool for quickly batch-evaluating a polynomial, which is a core technique in most uses of the polynomial method in algorithm design.

\begin{lemma}[Fast Polynomial Evaluation~\cite{williams2014faster}]\label{lem:fast_rect}
Given a $2D$-variate polynomial $P(x_1, \ldots, x_D, y_1, \ldots, y_D)$ over the integers with $M \leq N^{0.1}$ monomials, and where each coefficient is a $\text{polylog}(N,D)$-bit integer, and given $A,B \subseteq \{0,1\}^D$ such that $|A|, |B| \leq N$, we can evaluate $P$ on all pairs $(a_i, b_j) \in A \times B$ in $\tilde{O}(N^2 + D \cdot N^{1.2})$ time.
\end{lemma}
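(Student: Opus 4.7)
The plan is to reduce batch evaluation of $P$ on $|A| \cdot |B|$ pairs to a single rectangular matrix product, and then invoke fast (rectangular) matrix multiplication due to Coppersmith. The key observation is that a monomial in $P$ separates cleanly into an $x$-part and a $y$-part, since the variable sets $\{x_1,\ldots,x_D\}$ and $\{y_1,\ldots,y_D\}$ are disjoint.

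First I would write $P$ explicitly as a sum of its $M$ monomials,
\[
P(x,y) \;=\; \sum_{m=1}^{M} c_m \cdot f_m(x) \cdot g_m(y),
\]
where $c_m$ is an integer coefficient with $\mathrm{polylog}(N,D)$ bits, and $f_m, g_m$ are monomials in the $x$- and $y$-variables respectively. Then I would build two matrices: $U \in \mathbb{Z}^{|A| \times M}$ with $U[a,m] = f_m(a)$, and $V \in \mathbb{Z}^{M \times |B|}$ with $V[m,b] = c_m \cdot g_m(b)$. The matrix product $W = UV$ then satisfies $W[a,b] = P(a,b)$ for every $(a,b) \in A \times B$, so the problem is reduced to computing $W$.

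The total cost of building $U$ and $V$ is dominated by evaluating each of the $M \leq N^{0.1}$ monomials on each input vector, which takes $O(D)$ multiplications per evaluation; this gives $O(D \cdot M \cdot (|A|+|B|)) = \tilde{O}(D \cdot N^{1.1})$ word operations, and $\tilde{O}(D \cdot N^{1.2})$ after accounting for the $\mathrm{polylog}(N,D)$ bit-size of entries and a bit of slack. The main step is computing the $|A| \times M$ by $M \times |B|$ product $UV$. Since $M \leq N^{0.1}$, this is an extremely thin rectangular product, and a classical result of Coppersmith (see also Williams~\cite{williams2014faster}) shows that an $N \times N^{0.1}$ by $N^{0.1} \times N$ integer matrix product with polylogarithmic-bit entries can be computed in $\tilde{O}(N^2)$ time. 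Adding the two costs yields the claimed $\tilde{O}(N^2 + D \cdot N^{1.2})$ bound.

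The main obstacle I would expect is not the reduction itself, which is routine, but verifying that the bit-growth of intermediate entries stays under control: entries of $U$ and $V$ are $0/1$-valued monomial evaluations times small coefficients, but entries of $UV$ are sums of $M$ products with $\mathrm{polylog}(N,D)$-bit integers, so they remain of polylogarithmic bit-length and each scalar multiplication/addition is $\tilde{O}(1)$. Given that, the Coppersmith rectangular multiplication can be invoked as a black box, and the lemma follows.
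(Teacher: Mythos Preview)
Your proposal is correct and is exactly the standard argument behind this lemma. The paper does not prove Lemma~\ref{lem:fast_rect} at all---it is quoted from~\cite{williams2014faster} and the only remark the paper makes is that it ``is based on an algorithm by Coppersmith for rectangular matrix multiplication~\cite{coppersmith1982rapid},'' which is precisely the black box you invoke after the monomial-splitting reduction.
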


Lemma~\ref{lem:fast_rect} is based on an algorithm by Coppersmith for rectangular matrix multiplication~\cite{coppersmith1982rapid}. We note that this algorithm, which uses a small algebraic identity similar to Strassen's algorithm, is not necessarily impractical~\cite{williams2014faster}.

\subsection{Our algorithm and analysis}
Given the definition of the OV problem from Def.~\ref{def:ov}, the inputs are two sets $X,Y \subset \{0,1\}^d$ for $d = c \log n$. We arbitrarily partition $X$ into groups $X_1,\cdots,X_{n/s}$ with equal size $s$, where $s = n^{\Theta(\frac{\log \log c}{\log c})}$ for a constant hidden by the $\Theta$ to be determined later. We define the degree $q$ single-variable polynomial $Q : \mathbb{R} \to \mathbb{R}$ as $Q(z):= (z-p^2 d)^q$ for a degree $q$ to be determined. We will pick $q$ to be even so that $Q(z)\geq 0$ for all inputs $z \in \mathbb{R}$. Note that $p^2 d$ is the expected inner product between two random vectors $x,y \in \{0,1\}^d$ drawn from our distribution. Our algorithm is as follows. 

\subsubsection{Algorithm.}
Simultaneously, for every group $X_i$ ($i \in [n/s]$) and vector $y_j \in Y$ ($j \in [n]$), we calculate the quantity $$A_{i,j} := \sum_{x\in X_i} Q(\langle x,y_j \rangle).$$ We do this using $s$ calls to Lemma~\ref{lem:fast_rect} with $N = n/s$ as follows:  Given a group $X_i$ ($i\in [n/s]$) of vectors $X_i = \{ x_1, \ldots, x_s\}$ and a vector $y_j$ ($j\in [n]$), we design the polynomial $P$ to use in Lemma~\ref{lem:fast_rect} as 
\[
P(x_{1,1},\ldots,x_{1,d},\ldots,x_{s,1}\ldots,x_{s,d},y_1,\ldots,y_d):= \sum_{\ell=1}^s Q(\langle x_\ell, y\rangle).
\]
Thus, $A_{i,j}$ is indeed equal to $P(\{x\}_{x\in X_i},y_j)$. We will apply Lemma~\ref{lem:fast_rect} $s$ times, each time with $N = n/s$. Each time we will set $A = \{X_i\}_{i \in [n/s]}$, and we will set $B$ to be $n/s$ out of the $y_j$s, so that across all $s$ calls, we will evaluate $P$ on all $X_i$ and all $y_j$.

Finally, once we've computed all the $A_{i,j}$ values, for each $i,j$ with $A_{i,j} \geq (p^2 d)^q$, we brute force check whether any $x\in X_i$ is orthogonal to $y_j$. 

It is not hard to verify the \emph{correctness} of this algorithm: If $x \in X_i$ and $y_j$ are orthogonal, then $A_{i,j} \geq Q(\langle x,y_j\rangle) = (\langle x,y_j\rangle-p^2 d)^q = (p^2 d)^q$. We will thus detect it with brute force.

It remains to prove that, with a proper choice of $q$ and $s$, this algorithm will run in time $n^{2-\Omega(\frac{\log \log c}{\log c})}$ with high probability.

\subsubsection{Running Time Analysis}\label{subsubsec:analysis}

The key behind the proof of the running time of our algorithm is the following lemma, which says that we are unlikely to need to use brute force very often. 

\begin{lemma}\label{lem:core_lemma}
Let $q = \frac{\log n}{50\log c}$. There's a choice of the group size $s$, $s = n^{\Theta(\frac{\log \log c}{\log c})}$, such that the following is true. Consider a group $X_i$ and a vector $y_j$. With at least $1 - n^{-1/20000}$ probability over the randomness on $X_i$ and $y_j$, we have 
\[
A_{i,j} < (p^2d)^q.
\]
\end{lemma}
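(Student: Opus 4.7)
The plan is to apply Markov's inequality to the non-negative random variable $A_{i,j}$ (non-negative because $q$ is even). Writing $Z := \langle x, y\rangle$ for independent random $x, y$ whose coordinates are i.i.d.\ Bernoulli$(p)$, independence across the $s$ vectors in $X_i$ yields $\mathbb{E}[A_{i,j}] = s \cdot \mathbb{E}[(Z - p^2 d)^q]$, so it suffices to exhibit some $s = n^{\Theta(\log\log c/\log c)}$ for which
\[
s \cdot \frac{\mathbb{E}[(Z - p^2 d)^q]}{(p^2 d)^q} \leq n^{-1/20000}.
\]
The task thus reduces to a sharp upper bound on the $q$-th central moment of $Z \sim \mathrm{Binomial}(d, p^2)$.

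Let $\mu := p^2 d$, which lies in $[\log n, 9\log n]$ by the standing assumption $p \in [1/\sqrt{c}, 3/\sqrt{c}]$. Since $q = \log n/(50\log c) \leq \mu/50$ for $c$ above an absolute constant, we are deep in the Gaussian regime for central moments, and I would aim to prove $\mathbb{E}[(Z - \mu)^q] \leq (C \mu q)^{q/2}$ for an absolute constant $C$. I would establish this by writing $Z - \mu = \sum_{i=1}^d (B_i - p^2)$ with $B_i := x_i y_i \sim \mathrm{Bernoulli}(p^2)$ and expanding $(Z-\mu)^q$ as a multinomial sum. Any term containing some exponent $k_i = 1$ vanishes by centering, and a direct calculation gives $|\mathbb{E}[(B_i - p^2)^{k_i}]| \leq p^2$ for $k_i \geq 2$. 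Grouping the surviving terms by the number $m \leq q/2$ of coordinates with exponent $\geq 2$, using $\binom{d}{m} \leq (ed/m)^m$, and using the crude upper bound $m^q$ for the inner multinomial sum, the estimate collapses to
\[
\mathbb{E}[(Z - \mu)^q] \leq \sum_{m=1}^{q/2} \left(\frac{e\mu}{m}\right)^m m^q.
\]
A short calculus check (using $q \ll \mu$) shows the summand is increasing in $m$ on $[1, q/2]$, so the sum is dominated by its endpoint $(e\mu q/2)^{q/2}$, giving the claimed bound $(C\mu q)^{q/2}$.

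Dividing by $\mu^q$ and using $q/\mu \leq 1/(50\log c)$, the ratio is at most $(C/(50\log c))^{q/2} \leq (\log c)^{-q/2} = n^{-\log\log c/(100\log c)}$, where the second inequality absorbs $C$ once $c$ exceeds a sufficiently large absolute constant. Choosing $s$ so that $\log s$ is just below $\log n \cdot \log\log c/(100\log c)$ then makes the Markov bound at most $n^{-1/20000}$, and in the regime of $c$ where the claimed saving $\log\log c/\log c$ is the binding constraint, this $s$ has the desired form $n^{\Theta(\log\log c/\log c)}$.

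The main obstacle is obtaining the Gaussian-regime moment bound $(C\mu q)^{q/2}$ rather than the cruder tail-regime bound $(Cq)^q$: only the former buys a full $\log c$ factor per unit of $q$, and it is this factor compounded over $q = \Theta(\log n/\log c)$ that yields the target $n^{-\log\log c/\log c}$ improvement. The decisive quantitative input is the extra factor of $p^2$ produced by every exponent $k_i \geq 2$ in the multinomial expansion, which forces the sum to concentrate at its $m = q/2$ endpoint and is the precise way in which sparsity of the Bernoulli sources enters the bound.
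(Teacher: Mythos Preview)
Your approach is genuinely different from the paper's and, at its core, sound. The paper defines three high-probability events --- $E_1$ (concentration of $\|y_j\|_1$), $E_2$ (every $\langle x,y_j\rangle$ lies in $p^2d \pm 0.2\log n$), and $E_3$ (for each $\alpha$, a Chernoff-type bound on the count $n_\alpha$ of $x\in X_i$ with $\langle x,y_j\rangle = p^2d+\alpha\log n$) --- and then bounds $A_{i,j}$ deterministically on $E_1\cap E_2\cap E_3$ by splitting the sum over $\alpha$ into small-$|\alpha|$ and large-$|\alpha|$ ranges, the latter contribution being maximized at $\alpha=\Theta(\sqrt{\log\log c/\log c})$. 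Your direct central-moment estimate $\mathbb{E}[(Z-\mu)^q]\le(C\mu q)^{q/2}$ followed by Markov compresses all of this into essentially one computation and recovers the same exponent $n^{-\Theta(\log\log c/\log c)}$ implicitly; it is a real simplification.

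There is, however, a quantitative gap in your final paragraph. Markov gives $\Pr[A_{i,j}\ge(p^2d)^q]\le s\cdot n^{-\Theta(\log\log c/\log c)}$, so with $s=n^{\Theta(\log\log c/\log c)}$ the failure probability you actually obtain is only $n^{-\Theta(\log\log c/\log c)}$, not the fixed $n^{-1/20000}$ the lemma asserts. Your claim that choosing $\log s$ ``just below'' $\log n\cdot\log\log c/(100\log c)$ makes the Markov bound at most $n^{-1/20000}$ is false once $c$ is large enough that $\log\log c/(100\log c)<1/20000$: then no $s\ge 1$ can achieve it, and the hedge about ``the regime of $c$ where the claimed saving is the binding constraint'' does not rescue this since the lemma is asserted for every fixed $c$. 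The paper obtains the fixed polynomial bound precisely because each event $E_1,E_2,E_3$ individually fails with probability $O(n^{-0.0001})$, uniformly in $c$. That said, this gap is harmless for Theorem~\ref{thm:main}: substituting your weaker failure probability into the brute-force accounting still gives expected brute-force time $(n^2/s)\cdot n^{-\Theta(\log\log c/\log c)}\cdot sd = n^{2-\Theta(\log\log c/\log c)}$. So your argument proves the main theorem; it just does not prove the lemma exactly as stated.
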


We first prove that our algorithm has the desired running time assuming Lemma~\ref{lem:core_lemma} is true, then we will prove Lemma~\ref{lem:core_lemma} below.

We set $q = \frac{\log n}{50\log c}$. Since we calculate all $A_{i,j}$ using Lemma~\ref{lem:fast_rect}, let us verify that the conditions of that lemma are met. For $x,y$ vectors of length $d$, we can expand $Q(\langle x,y\rangle)$ into a sum of monomials, and since we will only be evaluating it on $x,y \in \{0,1\}^d$, we may further assume that $Q(\langle x,y\rangle)$ is expanded into a sum of \emph{multilinear} monomials. Thus, since it has degree $q$, its number of monomials will be at most $\sum_{\ell=0}^q \binom{2d}{\ell}$. Using the standard bound that $\binom{n}{k} \leq (n \cdot e / k)^k < (3 n / k)^k$, we can thus upper bound the number of monomials by $$\sum_{\ell=0}^q \binom{2d}{\ell} \leq \sum_{\ell=0}^q \binom{2c \log n}{\frac{\log n}{50\log c}} \leq 
q \cdot (300c \log c)^{\frac{\log n}{50 \log c}}
\leq n^{\frac{(1+o(1))\log c}{50\log c}}
\ll n^{0.05}$$
when $c$ is small enough. Finally, it is not hard to verify that all coefficients of $P$ are $\text{polylog}(n)$-bit integers.

Since $P$ is a sum of $s$ copies of $Q$, the total number of monomials in $P$ is $\ll n^{0.05} \cdot s < (n/s)^{0.1}$ since $s=n^{\Theta(\frac{\log \log c}{\log c})} \ll n^{0.01}$. Thus, since the running time of each call to Lemma~\ref{lem:fast_rect} is $N^{2 + o(1)}$ where $N=n/s$, it follows that the total running time for evaluating all the $A_{i,j}$ is $s \cdot N^{2 + o(1)} \leq n^{2+o(1)}/s = n^{2-\Omega(\frac{\log \log c}{\log c})}$. 

As for the time spent on brute force, recall that for each $i \in [n/s]$ and $j \in [n]$ we will run brute force if $A_{i,j} < (p^2d)^q$. We say the pair $(i,j)$ ``fails'' if $A_{i,j} < (p^2d)^q$. We now use Lemma~\ref{lem:core_lemma}. For notational convenience, define $\eps := 1/10000$. By picking $s = n^{\Theta(\frac{\log \log c}{\log c})}$, the probability that a given $(i,j)$ fails is only $n^{-0.5\eps}$ by Lemma~\ref{lem:core_lemma}. There are $n^2/s \leq n^2$ pairs of $(i,j)$, so the number of pairs that fail in expectation is $\leq n^{2-0.5\eps}$. Thus, by Markov's inequality, with probability at most $p':=n^{-0.1\eps}$, there are at most $n^{2-0.5\eps}/p'=n^{2-0.4\eps}$ pairs that fail. We spend $O(sd)$ time in the brute force for every failed pair. Since $s\leq n^{0.1\eps}$ for large enough $c$, this means that in total, we spend $n^{2-0.4\eps}sd\leq n^{2-0.3\eps}$ time in brute force, with probability $1-o(1)$. This is truly-subquadratic time, much smaller than our desired $n^{2-\Omega(\frac{\log \log c}{\log c})}$.

\subsubsection{Key Analysis Lemma}

We finally conclude our proof by proving Lemma~\ref{lem:core_lemma}. Again, for convenience, we let $\eps$ be the small constant $\eps:=1/10000$.
Our proof will proceed by defining three random events, $E_1$, $E_2$ and $E_3$, then proving that
\begin{enumerate}[label=\textbf{(\arabic*)}]
    \item The probability that all three of $E_1$, $E_2$ and $E_3$ happen is at least $1-n^{-\eps/2}$, and 
    \item If all three happen, then $A_{i,j} < (p^2d)^q$.
\end{enumerate}
These together will conclude the proof. 
We now define the events. Roughly, they each say that the densities and inner products of the randomly drawn vectors are close to their expectations.

\begin{itemize}
\item Let $E_1$ be: $\|y_j\|_1 \in pd \pm 0.1 c^{1/4} \log n$.
\item Let $E_2$ be: for all $x\in X_i$, we have $\langle x,y_j\rangle \in p^2 d \pm 0.2 \log n$. 
\item Let $E_3$ be the following. For all $\alpha$ in $[-0.2, 0.2]$ for which $\alpha \log n$ is an integer, let $n_{\alpha}$ be the number of $x\in X_i$ for which $\langle x,y_j\rangle = p^2 d + \alpha \log n$. We define $E_3$ to be the event that
\[
n_{\alpha} \leq \log n \cdot (1 + s \cdot n^{-\eps\alpha^2}), ~~~~~~\forall \alpha \in [-0.2,\frac{0.2}{c^{1/4}}] \cup [\frac{0.2}{c^{1/4}},0.2].
\]
\end{itemize}

\noindent\textbf{Proof of (1).} 
We will analyze each of the events individually and then apply a union bound.

\begin{itemize}
\item We have $\Pr[\neg E_1] \leq n^{-\eps}$, by using the Chernoff bound, Lemma~\ref{lem:chernoff}, where $\mu = pd$ is between $\sqrt{c}\log n$ and $3\sqrt{c}\log n$, and $\delta = \frac{0.1c^{1/4}\log n}{pd} \geq 0.03 c^{-1/4}\log n$. 
\item We have $\Pr[\neg E_2] \leq s\cdot n^{-\eps}$. Indeed, again, by the Chernoff bound, we have for any $x,y$ drawn from the distribution that
    \[
    \Pr_{x,y}[|\langle x,y \rangle - p^2d| \geq 0.2\log n] \leq n^{-0.01}.
    \]

    By a union bound on all $s$ choices of $x\in X_i$, we get that $\Pr[\neg E_2] \leq s\cdot n^{-0.01}\leq s\cdot n^{-\eps}$.
    
\item We have $\Pr[\neg E_3 | E_1]\leq 1/n^{10}$. To prove this, we first fix $y_j \in \{0,1\}^d$. Conditioned on $E_1$ being true, we have $\|y_j\|_1 \in pd \pm 0.1 c^{1/4}\log n$. Now, for a random vector $x$, the distribution of their inner product $\langle x, y_j\rangle$ is equivalent to the summation of $\|y_j\|_1$ iid random variables. Each equals $1$ with probability $p$ and $0$ otherwise. The expectation is thus $E[\langle x, y_j\rangle] = p\|y_j\|_1 \in p^2d\pm \frac{0.5}{c^{1/4}}\log n$. 

Now fix any $\alpha \in [\frac{1}{c^{1/4}},0.2]$, so that we have $(p^2 d + \alpha \log n) - E[\langle x,y_j\rangle] \geq 0.5 \alpha\log n$. So by a Chernoff bound with $\delta = \frac{0.5\alpha \log n}{2p^d}\geq 0.02\alpha$, we have
\[
\Pr[\langle x,y_j\rangle \geq p^2 d + \alpha \log n]\leq n^{- \alpha^2/10000} = n^{- \eps \alpha^2}.
\]
Thus, each $x\in X_i$ contributes $1$ to $n_{\alpha}$ with probability $\leq n^{-\eps \alpha^2}$, so $E[n_{\alpha}]\leq |X_i|n^{-\eps\alpha^2} = sn^{-\eps\alpha^2}$. Since the vectors $x\in X_i$ are drawn independently from each other, we can use the Chernoff bound to bound the probability that $n_{\alpha}$ is much larger than its expectation. In the case that $E[n_{\alpha}] \leq 1$, we will have $n_{\alpha} \leq \log n$ with high probability. When $E[n_{\alpha}]>1$, with high probability $n_{\alpha} \leq (\log n)E[n_{\alpha}]$.
Thus in total, we have $n_{\alpha} \leq s\log n\cdot (1+n^{-\eps\alpha^2})$ with probability $1-1/n^{10}$.

For $\alpha \in [-0.2 ,-\frac{1}{c^{1/4}}]$, an identical analysis applies.
\end{itemize}

By a union bound, we have as desired that for big enough $c$ (and thus small enough $s$),
\[
\Pr[E_1,E_2,E_3]\geq 1-n^{-\eps} - sn^{-\eps}-1/n^{10} \geq 1-n^{-\eps/2}.
\]

\noindent\textbf{Proof of (2)}

Now we will choose the group size $s$ so that, when all three of $E_1$, $E_2$ and $E_3$  happen, then we always have $A_{i,j} < (p^2d)^q$.

Recall that $A_{i,j} = \sum_{x\in X_i} Q(\langle x,y_j \rangle) = \sum_{x\in X_i} (\langle x,y_j \rangle - p^2d)^q$, and recall the definition of $n_\alpha$ from $E_3$. Supposing $E_2$ is true, we can rewrite $A_{i,j}$ in another form, that enumerates the number of $x$ with $\langle x,y_j\rangle = p^2d + \alpha \log n$ for each $\alpha$, i.e., $A_{i,j} = \sum_{\alpha \in [-0.2,0.2]} n_{\alpha} (\alpha \log n)^q$. We will break the choices of $\alpha$ into two cases and bound their contributions to the sum separately:
\[
A_{i,j} = \underbrace{\sum_{\alpha \in [-\frac{1}{c^{1/4}},\frac{1}{c^{1/4}}]} n_{\alpha} (\alpha \log n)^q}_{B} + \underbrace{\sum_{\substack{\alpha \in [-0.2, -\frac{1}{c^{1/4}}] \cup [\frac{1}{c^{1/4}}, 0.2]}}n_{\alpha}(\alpha \log n)^q}_{C}.
\]

For the left part, $B$, since $\alpha \in [-\frac{1}{c^{1/4}},\frac{1}{c^{1/4}}]$ implies that $(\alpha \log n)^q \leq (c^{-1/4}\log n)^q$, we get 
\begin{align*}
B & ~ \leq \left( \sum_{\alpha \in [-\frac{1}{c^{1/4}},\frac{1}{c^{1/4}}]} n_{\alpha} \right) \cdot (c^{-1/4}\log n)^q  \leq s \cdot (c^{-1/4}\log n)^q.
\end{align*}

For the right part, $C$, we use the bound on $n_{\alpha}$ from $E_3$,
\begin{align*}
C \leq & ~ \sum_{\substack{\alpha \in [-0.2, -\frac{1}{c^{1/4}}] \cup [\frac{1}{c^{1/4}}, 0.2]}}\log n \cdot (1 + s \cdot n^{-\eps\alpha^2}) \cdot (\alpha \log n)^q \\
\leq & ~ \sum_{\substack{\alpha \in [-0.2, 0.2]}}\log n \cdot (1 + s \cdot n^{-\eps\alpha^2}) \cdot (\alpha \log n)^q \\
\leq & ~ \log n \cdot \sum_{\alpha\in [-0.2,0.2]} (\alpha \log n)^q + \log n \cdot \sum_{\substack{\alpha \in [-0.2, 0.2]}}s \cdot n^{-\eps\alpha^2} \cdot (\alpha \log n)^q\\
\leq & ~ \log n \cdot \sum_{\alpha\in [-0.2,0.2]} (\alpha \log n)^q + s\cdot 0.4(\log n)^{q+2}\max_{\substack{\alpha \in [-0.2, 0.2]}} n^{-\eps\alpha^2}\alpha^q\\
\leq & ~ (0.2 \log n)^{q+2} + s\cdot (\log n)^{q+2} \cdot n^{-\Theta(\frac{\log \log c}{\log c})},
\end{align*}
where in the last step, we used that the quantity $n^{-\eps\alpha^2}\alpha^q$ is maximized at $\alpha = \Theta(\sqrt{\frac{\log \log c}{\log c}})$, in which case $\alpha^q = n^{-\eps\alpha^2} = n^{\Theta(\frac{\log \log c}{\log c})}$.

Thus,
\begin{align*}
A_{i,j} = B + C \leq & ~ (0.2\log n)^{q+2} + s(\log n)^{q+2}((c^{-1/4})^q + n^{-\Theta(\frac{\log \log c}{\log c})})\\
\leq & ~ (0.2\log n)^{q+2} + s(\log n)^{q+2}(n^{-\Theta(\frac{\log \log c}{\log c})}),
\end{align*}
where the last step is because $c^{-q/4} = n^{-\Omega(1)} \ll n^{-\Theta(\frac{\log \log c}{\log c})}$.


Finally, to ensure that $A_{i,j} \leq (p^2d)^q$, We will set $s$ such that $A_{i,j} \leq (\log n)^q \leq (p^2 d)^q$, since $p\geq 1/\sqrt{c}$ and thus $(p^2d)^q\geq (\log n)^q$. 
The $s$ we choose is
\begin{align*}
s
\leq & ~ \frac{(\log n)^q - (0.2 \log n)^{q+2}}{(\log n)^{q+2} (n^{-\Theta(\frac{\log \log c}{\log c})})} \\
= & ~ ((\log n)^{-2}-0.2^{q+2})n^{\Theta(\frac{\log \log c}{\log c})}\\
= & ~ n^{\Theta(\frac{\log \log c}{\log c})} .
\end{align*}

\section{Closest Pair under Random distribution}
\subsection{Preliminary}
In this section, we show how to solve the $\textbf{CP}_{n,d}$ problem (Definition~\ref{def:cp}) for dimension $d = c \log n$ in time $n^{2-\Omega(\frac{\log \log c}{\log c})}$, thus
proving Theorem~\ref{thm:main2}. The algorithm is nearly identical to the algorithm above for the Orthogonal Vectors problem, and our main work will be to prove a new version of Lemma~\ref{lem:core_lemma} in this setting.

We start with a standard manipulation so that we can focus on inner products rather than distances between vectors. Let $\mathbf{1}_d$ denote the all-1s vector of length $d$. Note that for two vectors $u,v \in \{0,1\}^d$, we always have $\|u-v\|_1 = \langle u, \mathbf{1}_d-v\rangle + \langle \mathbf{1}_d-u, v\rangle$. Thus, if we pad $ (\mathbf{1}_d-u)$ at the end of $u$ to get $x \in \{0,1\}^{2d}$, and pad $(\mathbf{1}_d-v)$ at the beginning of $v$ to get $y\in \{0,1\}^{2d}$, then we have $\langle x,y\rangle = \|u-v\|_1$.
In the remainder of this section, we will use letters $u,v$ to refer to the original vectors, and use letters $x\in X,y\in Y$ to refer to these padded versions that satisfy $\langle x,y\rangle = \|u-v\|_1$.

Let $\delta \in (0,1/2)$ denote the number such that the probability 
\begin{align}\label{eq:delta}
\Pr_{u,v\sim\{0,1\}^d}[\|u-v\|_1 \leq (1/2 - \delta)d] = 1/\sqrt{n}.
\end{align}
Note by a Chernoff bound that $\delta = \Theta(1/\sqrt{c})$. We see that in expectation, $n^{1.5}$ pairs of vectors will have distance $\leq (1/2-\delta)d$. We may thus assume the closest pair has distance $\leq (1/2-\delta)d$, since this is the case with probability $1-o(1)$.

Our algorithm will thus proceed by enumerating over all integers $t$ from $0$ to $(1/2-\delta)d$, and checking whether any pair has inner product $t$. This gives a negligible running time overhead of $O(d) \leq O(\log n)$.

\subsection{Our algorithm and analysis}

We will use almost the same algorithm as for Orthogonal Vectors, with the same choice of $q$ and $s$, i.e., $q=O(\frac{\log n}{\log c})$ is even, and $s=n^{\Theta(\frac{\log \log c}{\log c})}$. Here, we pick the polynomial to be $Q(x) := (x-d/2)^q$, since the expected inner product between a pair of randomly-drawn vectors is $d/2$.

\textbf{Algorithm.}
We partition arbitrarily $X$ into groups of the same size $s$. For every group $X_i$ and vector $y_j\in Y$, we calculate $A_{i,j} = \sum_{x\in X_i} Q(\langle x,y_j \rangle)$. If $A_{i,j} \geq (d/2-t)^q$, we brute force check if any $x\in X_i$ has inner product $t$ with $y_j$. 

The correctness of this algorithm follows again since, if some $x\in X_i$ has $\langle x,y_j\rangle =t$, then $A_{i,j}$ must be at least $(d/2-t)^q$ and we will find it in a brute force. The running time proof is also nearly identical, once we substitute in the following analogue of Lemma~\ref{lem:core_lemma}.

\subsection{Key Analysis Lemma}
Our new replacement for Lemma~\ref{lem:core_lemma} is as follows.
\begin{lemma}[Closest pair version]\label{lem:core_lemma_cp}
Let $q = \frac{\log n}{50\log c}$. There's a choice of the group size $s$, $s = n^{\Theta(\frac{\log \log c}{\log c})}$, such that the following is true. Consider a group $X_i$ and a vector $y_j$. With $1-n^{-1/200}$ probability over the randomness of $X_i$ and $y_j$, we have 
\[
A_{i,j} < (d/2-t)^q.
\]
\end{lemma}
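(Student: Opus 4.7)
The proof closely parallels that of Lemma~\ref{lem:core_lemma}, with the main adjustment coming from the different scale of the relevant deviations. Two structural simplifications help. First, the padding guarantees $\|x\|_1 = \|y_j\|_1 = d$ deterministically, so no analog of the event $E_1$ is needed. Second, for uniform $u,v$ the coordinates $(u\oplus v)_i$ are i.i.d.\ $\mathrm{Bernoulli}(1/2)$, so $\langle x,y_j\rangle = \|u-v\|_1$ is distributed as $\mathrm{Binomial}(d,1/2)$. Thus the natural scale for tail events is $\sqrt{d\log n} = \sqrt{c}\log n$, rather than $\log n$ as in OV. Because the algorithm iterates over each integer $t \in [0,(1/2-\delta)d]$, it suffices to prove the bound for the worst case $t = (1/2-\delta)d$, at which $(d/2-t)^q$ attains its minimum value $\geq ((\sqrt{c}\log n)/2)^q$.

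Parametrize the deviation by $\alpha$ defined through $\langle x,y_j\rangle = d/2 + \alpha\sqrt{c}\log n$, and set
\[
n_\alpha := |\{x \in X_i : \langle x,y_j\rangle = d/2 + \alpha\sqrt{c}\log n\}|.
\]
Using the Chernoff bound $\Pr[|\mathrm{Bin}(d,1/2)-d/2|\geq h] \leq 2e^{-2h^2/d}$ we obtain $\Pr[\langle x,y_j\rangle \geq d/2 + \alpha\sqrt{c}\log n] \leq 2n^{-2\alpha^2}$, matching the functional form of the tail used in Lemma~\ref{lem:core_lemma}. Define $E_2$ to be the event that $|\langle x,y_j\rangle - d/2|\leq 0.2\sqrt{c}\log n$ for every $x\in X_i$, and $E_3$ to be the event that $n_\alpha \leq \log n\cdot (1+s\cdot n^{-\eps\alpha^2})$ for every valid $\alpha\in [-0.2,-1/c^{1/4}]\cup[1/c^{1/4},0.2]$ (where ``valid'' means $\alpha\sqrt{c}\log n\in\mathbb{Z}$) and a suitable small constant $\eps$. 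Chernoff plus union bounds over $X_i$ and over the $O(\sqrt{c}\log n)$ choices of $\alpha$ yield $\Pr[E_2\wedge E_3] \geq 1 - n^{-1/200}$, in close analogy with part~(1) of Lemma~\ref{lem:core_lemma}.

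Assuming $E_2$ and $E_3$, decompose
\[
A_{i,j} = (\sqrt{c}\log n)^q\sum_\alpha n_\alpha\,\alpha^q = (\sqrt{c}\log n)^q(B+C),
\]
where $B$ is the contribution from $|\alpha|\leq 1/c^{1/4}$ and $C$ from $1/c^{1/4}<|\alpha|\leq 0.2$. The uniform bound $\sum n_\alpha \leq s$ gives $B \leq s\cdot c^{-q/4}$, and the $E_3$ bound gives $C \leq \poly(\log n)\cdot(0.2^q + s\cdot n^{-\Theta(\log\log c/\log c)})$, where the second summand comes from maximising $\alpha^q n^{-\eps\alpha^2}$ at $\alpha = \Theta(1/\sqrt{\log c})$. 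Choosing $s = n^{\Theta(\log\log c/\log c)}$ with a sufficiently small implicit constant then ensures $B+C < (1/2)^q$, whence $A_{i,j} < ((\sqrt{c}\log n)/2)^q \leq (d/2-t)^q$.

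\textbf{Main obstacle.} The delicate calibration is that, unlike in OV where the target deviation $p^2 d$ matches the natural tail scale $\log n$, here the target deviation $\delta d \approx (\sqrt{c}\log n)/2$ is smaller than the outer scale $(\sqrt{c}\log n)^q$ by the nontrivial factor $2^{-q} = n^{-\Theta(1/\log c)}$. Verifying that $B$, $\poly(\log n)\cdot 0.2^q$, and $\poly(\log n)\cdot s\cdot n^{-\Theta(\log\log c/\log c)}$ all fit under this strict $(1/2)^q$ budget requires balancing $q = (\log n)/(50\log c)$ against $s = n^{\Theta(\log\log c/\log c)}$; once this is done, the remaining bookkeeping follows Lemma~\ref{lem:core_lemma} nearly verbatim.
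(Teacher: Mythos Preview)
Your approach is essentially the paper's: parametrize deviations on the scale $\sqrt{c}\log n$, control the counts $n_\alpha$ via Chernoff, and bound $A_{i,j}/(\sqrt{c}\log n)^q$ by a sum that the choice $s=n^{\Theta(\log\log c/\log c)}$ keeps below the target. Two small points of divergence are worth noting. First, the paper does not split into $B+C$: it defines $E_3$ over the \emph{entire} range $|\alpha|\le \delta_2\sqrt{c}$ (with $\delta_2$ the $n^{-0.01}$ quantile) and bounds the whole sum at once; your inner range $|\alpha|\le c^{-1/4}$ is harmless but unnecessary here. Second, and more importantly, your claimed target $(1/2)^q$ is slightly too generous: since $\log$ is base~$2$, one has $\delta\sqrt{c}=\sqrt{\ln 2}/2\approx 0.416$, so at the extremal $t=(1/2-\delta)d$ the threshold is $(\delta d)^q\approx(0.416\,\sqrt{c}\log n)^q$, which is \emph{smaller} than $((\sqrt{c}\log n)/2)^q$. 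Thus ``$B+C<(1/2)^q$'' does not by itself give $A_{i,j}<(d/2-t)^q$. The fix is immediate---replace $1/2$ by the true constant $k=\delta\sqrt{c}$ and keep your outer cutoff $0.2$ strictly below $k$ (which it is)---and this is exactly what the paper does, taking the outer cutoff to be $k_2=\delta_2\sqrt{c}<k$ and using $(k/k_2)^q\gg(\log n)^2$ in the final step.
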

\begin{proof}
Similar to Lemma~\ref{lem:core_lemma}, our proof will proceed by defining two random events, $E_2$ and $E_3$, then proving that
\begin{enumerate}[label=\textbf{(\arabic*)}]
    \item The probability that both $E_2$, $E_3$ happens is at least $1-n^{-1/200}$, and 
    \item If both of them happen, then $A_{i,j} < (d/2-t)^q$.
\end{enumerate}

We now define the events.
\begin{itemize}
\item Let $E_2$ be: for all $x\in X_i$, we have $\langle x,y_j\rangle \in (1/2 \pm \delta_2)d$. Here, $\delta_2$ is defined such that 
\begin{align}\label{eq:delta2}
\Pr_{u,v\sim \{0,1\}^d}[\|u-v\|_1 \leq (1/2-\delta_2)d]=1/n^{0.01}.
\end{align}
\item Let $E_3$ be the following. For all $\alpha$ in $[-\delta_2 \sqrt{c}, \delta_2\sqrt{c}]$ for which $(1/2 + \alpha/\sqrt{c})d$ is an integer, let $n_{\alpha}$ be the number of $x\in X_i$ that $\langle x,y_j\rangle = (1/2 + \alpha/\sqrt{c})d$. We define $E_3$ to be the case when
\[
n_{\alpha} \leq \log n \cdot (1 + s \cdot n^{-0.5\alpha^2}), ~~~~~~\forall \alpha \in [-\delta_2\sqrt{c}, \delta_2\sqrt{c}].
\]
\end{itemize}

\noindent\textbf{Proof of (1).}

\begin{itemize}
\item We have $\Pr[\neg E_2] \leq s\cdot n^{-0.01}$. This is by definition of $\delta_2$ and applying union bound on all $x\in X_i$.
    
\item We have $\Pr[\neg E_3]\leq 1/n^{10}$. 
To see this, note first that by a Chernoff bound, for a random $x$ from $X_i$, with $\mu = d/2$ and $\delta = \frac{\alpha}{2\sqrt{c}}$, we have 
\[
\Pr[|\langle x,y_j \rangle  - d/2| \geq \frac{\alpha}{\sqrt{c}}d]\leq n^{-0.5\alpha^2},~\forall \alpha > 0.
\]

For two vectors $x_1,x_2\in X_i$, one can see that by symmetry, the value $\langle x_1,y_j\rangle$ is independent of the value $\langle x_2,y_j\rangle$.  Thus, by a Chernoff bound argument like before, we have $n_{\alpha} \leq \log n \cdot (1 + s \cdot n^{-0.5\alpha^2})$ with probability $1-1/n^{10}$. 

\end{itemize}

Thus, by a union bound, we have
\[
\Pr[E_2,E_3]\geq 1- sn^{-0.01}-1/n^{10} \geq 1-n^{-1/200}.
\]

\noindent\textbf{Proof of (2).}

Recall that $A_{i,j} = \sum_{x\in X_i} Q(\langle x,y_j \rangle) = \sum_{x\in X_i} (\langle x,y_j \rangle - d/2)^q$. Supposing $E_2$ is true, we can rewrite $A_{i,j}$ similar to before as $A_{i,j} = \sum_{\alpha \in [-\delta_2\sqrt{c},\delta_2\sqrt{c}]} n_{\alpha} (\alpha \sqrt{c}\log n)^q$. 

Thus, $A_{i,j}$ is upper bounded by
\begin{align*}
\frac{A_{i,j}}{\sqrt{c}^q}
= & ~ \sum_{\alpha \in [-\delta_2\sqrt{c}, \delta_2\sqrt{c}]} n_{\alpha} (\alpha \log n)^q \\
= & ~ \log n \cdot \sum_{\alpha \in [-\delta_2\sqrt{c}, \delta_2\sqrt{c}]} (1+sn^{-0.5\alpha^2}) (\alpha \log n)^q \\
\leq & ~  \log n\sum_{\alpha\in [-\delta_2\sqrt{c}, \delta_2\sqrt{c}]} (\alpha \log n)^q +s\cdot(\log n)^{q+2}\max_{\substack{\alpha \in [-\delta_2\sqrt{c}, \delta_2\sqrt{c}]}} n^{-0.5\alpha^2}\alpha^q\\
\leq & ~  (\delta_2\sqrt{c}\log n)^{q+2} + s(\log n)^{q+2} n^{-\Theta(\frac{\log \log c}{\log c})},
\end{align*}
where in the last step, we used (as before) that the quantity $n^{-0.5\alpha^2}\alpha^q$ is maximized at $\alpha = \Theta(\sqrt{\frac{\log \log c}{\log c}})$, at which $\alpha^q  = n^{-0.5\alpha^2}= n^{-\Theta(\frac{\log \log c}{\log c})}$.

Note that $(\delta d)^q\leq (d/2-t)^q$ since $t \leq (1/2-\delta)d$. Our goal is to set $s$ such that $A_{i,j}\leq (\delta d)^q$, which implies as desired that $A_{i,j}\leq (d/2-t)^q$. 

Before setting $s$, we make one final definition. Note that by comparing the definitions of $\delta$ (Eq.~\eqref{eq:delta}) and $\delta_2$ (Eq.~\eqref{eq:delta2}), we have $\delta,\delta_2 = \Theta(1/\sqrt{c})$.  Define the constants $k, k_2 > 0$ such that $\delta = k/\sqrt{c}$ and $\delta_2 = k_2/\sqrt{c}$. By definition, the ratio $\frac{k}{k_2}$ is strictly greater than $1$. Thus, $A_{i,j}\leq (\delta d)^q$ means $A_{i,j}/\sqrt{c}^q \leq (\delta d)^q/\sqrt{c}^q =(k\log n)^q$. 

Thus, we choose $s$ to be
\begin{align*}
s 
\leq & ~ \frac{(k\log n)^q - (\delta_2 \sqrt{c} \log n)^{q+2}}{(\log n)^{q+2} (n^{-\Theta(\frac{\log \log c}{\log c})})} \\
= & ~ n^{\Theta(\frac{\log \log c}{\log c})}\cdot (k^q(\log n)^{-2} - (\delta_2 \sqrt{c})^{q+2})\\
= & ~ n^{\Theta(\frac{\log \log c}{\log c})}\cdot (k^q(\log n)^{-2} - k_2^{q+2})\\
= & ~ n^{\Theta(\frac{\log \log c}{\log c})}\cdot ((1-o(1))k^q(\log n)^{-2})\\
= & ~ n^{\Theta(\frac{\log \log c}{\log c})} \cdot k^{\frac{\log n}{50 \log c}} \\
= & ~ n^{\Theta(\frac{\log \log c}{\log c})},
\end{align*}
where the fourth step is because $(k/k_2)^q = (k/k_2)^{\Theta(\frac{\log n}{\log c})} \gg (\log n)^2$.
\end{proof}

\section{Conclusion}

In this paper, we gave faster algorithms for the average-case problems $\OV(p)_{n,d}$ and $\CP_{n,d}$ than the best-known worst-case algorithms. While these are not truly subquadratic-time algorithms, since the savings in the exponent depend on the constant $c$ such that $d = c \log n$, they nonetheless suggest that these problems may be easier in the average-case.

On the other hand, it seems difficult to use techniques similar to ours to get any further improvements. For instance, as in prior work~\cite{valiant2015finding,alman2016polynomial}, one may consider a more intricate polynomial such as a Chebyshev polynomial to replace our simple polynomial $P$, but one can verify that this could only decrease the needed degree $q$ by a constant factor: Chebyshev polynomials help most when one is distinguishing an orthogonal pair from many vectors with small inner product, whereas here we critically use that most pairs of vectors have large inner product.

We leave open the exciting questions of whether further improvements are possible for average-case OV, and whether one could use our techniques to improve the best algorithm for worst-case OV.

\bibliographystyle{alpha}
\bibliography{myref}

\newcommand{\etalchar}[1]{$^{#1}$}
\begin{thebibliography}{KKKC16}

\bibitem[ACW16]{alman2016polynomial}
Josh Alman, Timothy~M Chan, and Ryan Williams.
\newblock Polynomial representations of threshold functions and algorithmic applications.
\newblock In {\em 2016 IEEE 57th Annual Symposium on Foundations of Computer Science (FOCS)}, pages 467--476. IEEE, 2016.

\bibitem[Alm18]{alman2018illuminating}
Josh Alman.
\newblock An illuminating algorithm for the light bulb problem.
\newblock {\em arXiv preprint arXiv:1810.06740}, 2018.

\bibitem[AW15]{alman2015probabilistic}
Josh Alman and Ryan Williams.
\newblock Probabilistic polynomials and hamming nearest neighbors.
\newblock In {\em 2015 IEEE 56th Annual Symposium on Foundations of Computer Science}, pages 136--150. IEEE, 2015.

\bibitem[AWY14]{abboud2014more}
Amir Abboud, Ryan Williams, and Huacheng Yu.
\newblock More applications of the polynomial method to algorithm design.
\newblock In {\em Proceedings of the twenty-sixth annual ACM-SIAM symposium on Discrete algorithms}, pages 218--230. SIAM, 2014.

\bibitem[AZ23]{alman2023generalizations}
Josh Alman and Hengjie Zhang.
\newblock Generalizations of matrix multiplication can solve the light bulb problem.
\newblock In {\em 2023 IEEE 64th Annual Symposium on Foundations of Computer Science (FOCS)}, pages 1471--1495. IEEE, 2023.

\bibitem[BABB21]{boix2021average}
Enric Boix-Adser{\`a}, Matthew Brennan, and Guy Bresler.
\newblock The average-case complexity of counting cliques in erd\"os--r\'enyi hypergraphs.
\newblock {\em SIAM Journal on Computing}, (0):FOCS19--39, 2021.

\bibitem[BHMS21]{bhandari2021probabilistic}
Siddharth Bhandari, Prahladh Harsha, Tulasimohan Molli, and Srikanth Srinivasan.
\newblock On the probabilistic degree of or over the reals.
\newblock {\em Random Structures \& Algorithms}, 59(1):53--67, 2021.

\bibitem[BRS{\etalchar{+}}90]{beigel1990perceptron}
Richard Beigel, Nick Reingold, Daniel Spielman, et~al.
\newblock {\em The perceptron strikes back}.
\newblock Yale University, Department of Computer Science, 1990.

\bibitem[BRSV17]{ball2017average}
Marshall Ball, Alon Rosen, Manuel Sabin, and Prashant~Nalini Vasudevan.
\newblock Average-case fine-grained hardness.
\newblock In {\em Proceedings of the 49th Annual ACM SIGACT Symposium on Theory of Computing}, pages 483--496, 2017.

\bibitem[BRSV18]{ball2018proofs}
Marshall Ball, Alon Rosen, Manuel Sabin, and Prashant~Nalini Vasudevan.
\newblock Proofs of work from worst-case assumptions.
\newblock In {\em Advances in Cryptology--CRYPTO 2018: 38th Annual International Cryptology Conference, Santa Barbara, CA, USA, August 19--23, 2018, Proceedings, Part I 38}, pages 789--819. Springer, 2018.

\bibitem[BSDV20]{brakerski2020hardness}
Zvika Brakerski, Noah Stephens-Davidowitz, and Vinod Vaikuntanathan.
\newblock On the hardness of average-case k-sum.
\newblock {\em arXiv preprint arXiv:2010.08821}, 2020.

\bibitem[Che18]{chen2018hardness}
Lijie Chen.
\newblock On the hardness of approximate and exact (bichromatic) maximum inner product.
\newblock In {\em Proceedings of the 33rd Computational Complexity Conference}, pages 1--45, 2018.

\bibitem[Cop82]{coppersmith1982rapid}
Don Coppersmith.
\newblock Rapid multiplication of rectangular matrices.
\newblock {\em SIAM Journal on Computing}, 11(3):467--471, 1982.

\bibitem[CW16]{chan2016deterministic}
Timothy~M Chan and Ryan Williams.
\newblock Deterministic apsp, orthogonal vectors, and more: Quickly derandomizing razborov-smolensky.
\newblock In {\em Proceedings of the twenty-seventh annual ACM-SIAM symposium on Discrete algorithms}, pages 1246--1255. SIAM, 2016.

\bibitem[DLW20]{dalirrooyfard2020new}
Mina Dalirrooyfard, Andrea Lincoln, and Virginia~Vassilevska Williams.
\newblock New techniques for proving fine-grained average-case hardness.
\newblock In {\em 2020 IEEE 61st Annual Symposium on Foundations of Computer Science (FOCS)}, pages 774--785. IEEE, 2020.

\bibitem[GR18]{goldreich2018counting}
Oded Goldreich and Guy Rothblum.
\newblock Counting t-cliques: Worst-case to average-case reductions and direct interactive proof systems.
\newblock In {\em 2018 IEEE 59th Annual Symposium on Foundations of Computer Science (FOCS)}, pages 77--88. IEEE, 2018.

\bibitem[IP01]{impagliazzo2001complexity}
Russell Impagliazzo and Ramamohan Paturi.
\newblock On the complexity of k-sat.
\newblock {\em Journal of Computer and System Sciences}, 62(2):367--375, 2001.

\bibitem[KKK18]{karppa2018faster}
Matti Karppa, Petteri Kaski, and Jukka Kohonen.
\newblock A faster subquadratic algorithm for finding outlier correlations.
\newblock {\em ACM Transactions on Algorithms (TALG)}, 14(3):1--26, 2018.

\bibitem[KKKC16]{karppa2016explicit}
Matti Karppa, Petteri Kaski, Jukka Kohonen, and Padraig~{\'O} Cath{\'a}in.
\newblock Explicit correlation amplifiers for finding outlier correlations in deterministic subquadratic time.
\newblock {\em arXiv preprint arXiv:1606.05608}, 2016.

\bibitem[KW19]{kane2017orthogonal}
Daniel~M Kane and Richard~Ryan Williams.
\newblock The orthogonal vectors conjecture for branching programs and formulas.
\newblock In {\em 10th Innovations in Theoretical Computer Science Conference (ITCS 2019)}. Schloss Dagstuhl-Leibniz-Zentrum fuer Informatik, 2019.

\bibitem[LLVW19]{lavigne2019public}
Rio LaVigne, Andrea Lincoln, and Virginia Vassilevska~Williams.
\newblock Public-key cryptography in the fine-grained setting.
\newblock In {\em Annual International Cryptology Conference}, pages 605--635. Springer, 2019.

\bibitem[Raz87]{razborov1987lower}
Alexander~A Razborov.
\newblock Lower bounds on the size of bounded depth circuits over a complete basis with logical addition.
\newblock {\em Mat. Zametki}, 41(4):598--607, 1987.

\bibitem[Smo87]{smolensky1987algebraic}
Roman Smolensky.
\newblock Algebraic methods in the theory of lower bounds for boolean circuit complexity.
\newblock In {\em Proceedings of the nineteenth annual ACM symposium on Theory of computing}, pages 77--82, 1987.

\bibitem[Tar93]{tarui1993probabilistic}
Jun Tarui.
\newblock Probabilistic polynomials, ac0 functions and the polynomial-time hierarchy.
\newblock {\em Theoretical computer science}, 113(1):167--183, 1993.

\bibitem[Val15]{valiant2015finding}
Gregory Valiant.
\newblock Finding correlations in subquadratic time, with applications to learning parities and the closest pair problem.
\newblock {\em Journal of the ACM (JACM)}, 62(2):1--45, 2015.

\bibitem[Wil05]{williams2005new}
Ryan Williams.
\newblock A new algorithm for optimal 2-constraint satisfaction and its implications.
\newblock {\em Theoretical Computer Science}, 348(2-3):357--365, 2005.

\bibitem[Wil14]{williams2014faster}
Ryan Williams.
\newblock Faster all-pairs shortest paths via circuit complexity.
\newblock In {\em Proceedings of the forty-sixth annual ACM symposium on Theory of computing}, pages 664--673, 2014.

\bibitem[Wil18]{williams2018some}
Virginia~Vassilevska Williams.
\newblock On some fine-grained questions in algorithms and complexity.
\newblock In {\em Proceedings of the international congress of mathematicians: Rio de janeiro 2018}, pages 3447--3487. World Scientific, 2018.

\end{thebibliography}
\end{document}